\documentclass{article}
\usepackage{spconf}
\pagestyle{empty}
\pdfoutput=1
\usepackage{grffile}
%\usepackage{cite}\ifCLASSINFOpdf
% \usepackage[pdftex]{graphicx}\else  \usepackage[dvips]{graphicx}
%\fi
\usepackage{epstopdf}
\usepackage{amsmath}
\usepackage{amsfonts}
\usepackage{array}
\usepackage{amsthm}
\usepackage{algorithm}
\usepackage{algpseudocode}
%\ifCLASSOPTIONcompsoc
%  \usepackage[caption=false,font=normalsize,labelfont=sf,textfont=sf]{subfig}
%\else
%  \usepackage[caption=false,font=footnotesize]{subfig}
%\fi
\usepackage{epstopdf}
\usepackage{fixltx2e}
\usepackage{textcomp}
\usepackage{stfloats}
\usepackage{url}
\usepackage{color}
\usepackage{dsfont}
\hyphenation{op-tical net-works semi-conduc-tor}
\usepackage{balance}

\usepackage{etoolbox}

\usepackage[demo]{graphicx}
\usepackage{caption}
\usepackage{subcaption}

\newtheorem{lemma}{Lemma}
\AfterEndEnvironment{lemma}{\noindent\ignorespaces}
\newtheorem{theorem}{Theorem}
\AfterEndEnvironment{theorem}{\noindent\ignorespaces}
\newtheorem{definition}{Definition}
\AfterEndEnvironment{definition}{\noindent\ignorespaces}
\AfterEndEnvironment{proof}{\noindent\ignorespaces}

\newcommand{\sqeq}{\medmuskip=2mu \thinmuskip=1mu \thickmuskip=3mu}
\newcommand{\ssqeq}{\medmuskip=1mu \thinmuskip=0mu \thickmuskip=2mu \nulldelimiterspace=-1pt \scriptspace=0pt}

\epstopdfsetup{outdir=./}

\begin{document}
\ninept
\title{\vspace{-15mm} Recovery of Missing Data in Correlated Smart Grid Datasets}

\name{	
Cristian Genes$^1$,
I\~{n}aki~Esnaola$^{1,3}$,
Samir~M.~Perlaza$^{2,3}$
and Daniel Coca$^1$.
\thanks{Email addresses: c.genes@sheffield.ac.uk, esnaola@sheffield.ac.uk, samir.perlaza@inria.fr, and d.coca@sheffield.ac.uk}
%\thanks{This work was supported by the University of Sheffield Future Cities Scholarship and \alert{TBC}. (c.genes@sheffield.ac.uk, esnaola@sheffield.ac.uk, samir.perlaza@inria.fr and d.coca@sheffield.ac.uk)}
}
%\thanks{Cristian Genes and Daniel Coca are with the Department of Automatic Control and Systems Engineering, University of Sheffield, Sheffield S1 3JD, UK.}
%\thanks{ I\~{n}aki Esnaola is with the Department of Automatic Control and Systems Engineering, University of Sheffield, Sheffield S1 3JD, UK, and also with the Department of Electrical Engineering, Princeton University, Princeton NJ 08540, USA.}
%\thanks{Samir M. Perlaza is with the Laboratoire CITI (a joint laboratory between the Universit\'e de Lyon, INSA de Lyon, and INRIA), 6 Avenue des Arts, F-69621, Villeurbanne, France, and also with the Department of Electrical Engineering, Princeton University, Princeton NJ 08540, USA.}}
%\thanks{c.genes@sheffield.ac.uk, esnaola@sheffield.ac.uk, samir.perlaza@inria.fr and d.coca@sheffield.ac.uk}
%}
%\address{c.genes@sheffield.ac.uk, esnaola@sheffield.ac.uk, samir.perlaza@inria.fr and d.coca@sheffield.ac.uk}

\address{$^1$Department of Automatic Control and Systems Engineering, University of Sheffield, UK \\
$^2$Institut National de Recherche en Informatique et en Automatique (INRIA), Lyon, France\\
$^3$Department of Electrical Engineering, Princeton University, NJ, USA.}

%%----------- alternative for authos
% use for special paper notices
%\IEEEspecialpapernotice{(Invited Paper)}

%
\setlength\unitlength{1mm}

\newcommand{\insertfig}[3]{
\begin{figure}[htbp]\begin{center}\begin{picture}(120,90)
\put(0,-5){\includegraphics[width=12cm,height=9cm,clip=]{#1.eps}}\end{picture}\end{center}
\caption{#2}\label{#3}\end{figure}}

\newcommand{
\begin{figure}[htbp]
\begin{center}
\leavevmode \centerline{\resizebox{\textwidth}{!}{\input
.pstex_t}}
%\vspace*{-0.2in}
\caption{} \label{}
\end{center}
\end{figure}}[4]{
\begin{figure}[htbp]
\begin{center}
\leavevmode \centerline{\resizebox{#4\textwidth}{!}{\input
#1.pstex_t}}
%\vspace*{-0.2in}
\caption{#2} \label{#3}
\end{center}
\end{figure}}

\long\def\comment#1{}

% bb font symbols

\newfont{\bbb}{msbm10 scaled 700}
\newcommand{\CCC}{\mbox{\bbb C}}

\newfont{\bb}{msbm10 scaled 1100}
\newcommand{\CC}{\mbox{\bb C}}
\renewcommand{\SS}{\mbox{\bb S}}
\newcommand{\RR}{\mbox{\bb R}}
\newcommand{\PP}{\mbox{\bb P}}
\newcommand{\QQ}{\mbox{\bb Q}}
\newcommand{\ZZ}{\mbox{\bb Z}}
\newcommand{\FF}{\mbox{\bb F}}
\newcommand{\GG}{\mbox{\bb G}}
\newcommand{\EE}{\mbox{\bb E}}
\newcommand{\NN}{\mbox{\bb N}}
\newcommand{\KK}{\mbox{\bb K}}

% Vectors

\newcommand{\av}{{\bf a}}
\newcommand{\bv}{{\bf b}}
\newcommand{\cv}{{\bf c}}
\newcommand{\dv}{{\bf d}}
\newcommand{\ev}{{\bf e}}
\newcommand{\fv}{{\bf f}}
\newcommand{\gv}{{\bf g}}
\newcommand{\hv}{{\bf h}}
\newcommand{\iv}{{\bf i}}
\newcommand{\jv}{{\bf j}}
\newcommand{\kv}{{\bf k}}
\newcommand{\lv}{{\bf l}}
\newcommand{\mv}{{\bf m}}
\newcommand{\nv}{{\bf n}}
\newcommand{\ov}{{\bf o}}
\newcommand{\pv}{{\bf p}}
\newcommand{\qv}{{\bf q}}
\newcommand{\rv}{{\bf r}}
\newcommand{\sv}{{\bf s}}
\newcommand{\tv}{{\bf t}}
\newcommand{\uv}{{\bf u}}
\newcommand{\wv}{{\bf w}}
\newcommand{\vv}{{\bf v}}
\newcommand{\xv}{{\bf x}}
\newcommand{\yv}{{\bf y}}
\newcommand{\zv}{{\bf z}}
\newcommand{\ellv}{{\bf \ell}}
\newcommand{\zerov}{{\bf 0}}
\newcommand{\onev}{{\bf 1}}

% Matrices

\newcommand{\Am}{{\bf A}}
\newcommand{\Bm}{{\bf B}}
\newcommand{\Cm}{{\bf C}}
\newcommand{\Dm}{{\bf D}}
\newcommand{\Em}{{\bf E}}
\newcommand{\Fm}{{\bf F}}
\newcommand{\Gm}{{\bf G}}
\newcommand{\Hm}{{\bf H}}
\newcommand{\Id}{{\bf I}}
\newcommand{\Jm}{{\bf J}}
\newcommand{\Km}{{\bf K}}
\newcommand{\Lm}{{\bf L}}
\newcommand{\Mm}{{\bf M}}
\newcommand{\Nm}{{\bf N}}
\newcommand{\Om}{{\bf O}}
\newcommand{\Pm}{{\bf P}}
\newcommand{\Qm}{{\bf Q}}
\newcommand{\Rm}{{\bf R}}
\newcommand{\Sm}{{\bf S}}
\newcommand{\Tm}{{\bf T}}
\newcommand{\Um}{{\bf U}}
\newcommand{\Wm}{{\bf W}}
\newcommand{\Vm}{{\bf V}}
\newcommand{\Xm}{{\bf X}}
\newcommand{\Ym}{{\bf Y}}
\newcommand{\Zm}{{\bf Z}}

% Calligraphic

\newcommand{\Ac}{{\cal A}}
\newcommand{\Bc}{{\cal B}}
\newcommand{\Cc}{{\cal C}}
\newcommand{\Dc}{{\cal D}}
\newcommand{\Ec}{{\cal E}}
\newcommand{\Fc}{{\cal F}}
\newcommand{\Gc}{{\cal G}}
\newcommand{\Hc}{{\cal H}}
\newcommand{\Ic}{{\cal I}}
\newcommand{\Jc}{{\cal J}}
\newcommand{\Kc}{{\cal K}}
\newcommand{\Lc}{{\cal L}}
\newcommand{\Mc}{{\cal M}}
\newcommand{\Nc}{{\cal N}}
\newcommand{\Oc}{{\cal O}}
\newcommand{\Pc}{{\cal P}}
\newcommand{\Qc}{{\cal Q}}
\newcommand{\Rc}{{\cal R}}
\newcommand{\Sc}{{\cal S}}
\newcommand{\Tc}{{\cal T}}
\newcommand{\Uc}{{\cal U}}
\newcommand{\Wc}{{\cal W}}
\newcommand{\Vc}{{\cal V}}
\newcommand{\Xc}{{\cal X}}
\newcommand{\Yc}{{\cal Y}}
\newcommand{\Zc}{{\cal Z}}

% Bold greek letters

\newcommand{\alphav}{\hbox{\boldmath$\alpha$}}
\newcommand{\betav}{\hbox{\boldmath$\beta$}}
\newcommand{\gammav}{\hbox{\boldmath$\gamma$}}
\newcommand{\deltav}{\hbox{\boldmath$\delta$}}
\newcommand{\etav}{\hbox{\boldmath$\eta$}}
\newcommand{\lambdav}{\hbox{\boldmath$\lambda$}}
\newcommand{\epsilonv}{\hbox{\boldmath$\epsilon$}}
\newcommand{\nuv}{\hbox{\boldmath$\nu$}}
\newcommand{\muv}{\hbox{\boldmath$\mu$}}
\newcommand{\zetav}{\hbox{\boldmath$\zeta$}}
\newcommand{\phiv}{\hbox{\boldmath$\phi$}}
\newcommand{\psiv}{\hbox{\boldmath$\psi$}}
\newcommand{\thetav}{\hbox{\boldmath$\theta$}}
\newcommand{\tauv}{\hbox{\boldmath$\tau$}}
\newcommand{\omegav}{\hbox{\boldmath$\omega$}}
\newcommand{\xiv}{\hbox{\boldmath$\xi$}}
\newcommand{\sigmav}{\hbox{\boldmath$\sigma$}}
\newcommand{\piv}{\hbox{\boldmath$\pi$}}
\newcommand{\rhov}{\hbox{\boldmath$\rho$}}

\newcommand{\Gammam}{\hbox{\boldmath$\Gamma$}}
\newcommand{\Lambdam}{\hbox{\boldmath$\Lambda$}}
\newcommand{\Deltam}{\hbox{\boldmath$\Delta$}}
\newcommand{\Sigmam}{\hbox{\boldmath$\Sigma$}}
\newcommand{\Phim}{\hbox{\boldmath$\Phi$}}
\newcommand{\Pim}{\hbox{\boldmath$\Pi$}}
\newcommand{\Psim}{\hbox{\boldmath$\Psi$}}
\newcommand{\Thetam}{\hbox{\boldmath$\Theta$}}
\newcommand{\Omegam}{\hbox{\boldmath$\Omega$}}
\newcommand{\Xim}{\hbox{\boldmath$\Xi$}}

% mixed symbols
\newcommand{\supp}{{\hbox{supp}}}
\newcommand{\sinc}{{\hbox{sinc}}}
\newcommand{\diag}{{\hbox{diag}}}
\renewcommand{\det}{{\hbox{det}}}
\newcommand{\trace}{{\hbox{tr}}}
\newcommand{\sign}{{\hbox{sign}}}
\renewcommand{\arg}{{\hbox{arg}}}
\newcommand{\var}{{\hbox{var}}}
\newcommand{\cov}{{\hbox{cov}}}
\newcommand{\SINR}{{\sf SINR}}
\newcommand{\SNR}{{\sf SNR}}
\newcommand{\Ei}{{\rm E}_{\rm i}}
\renewcommand{\Re}{{\rm Re}}
\renewcommand{\Im}{{\rm Im}}
\newcommand{\eqdef}{\stackrel{\Delta}{=}}
\newcommand{\defines}{{\,\,\stackrel{\scriptscriptstyle \bigtriangleup}{=}\,\,}}
\newcommand{\<}{\left\langle}
\renewcommand{\>}{\right\rangle}
\newcommand{\herm}{{\sf H}}
\newcommand{\transp}{{\sf T}}
\renewcommand{\vec}{{\rm vec}}

%games

\newcommand{\GameNF}{\mathcal{G} = \left(\mathcal{K}, \left\lbrace\mathcal{A}_k \right\rbrace_{k \in \mathcal{K}},\phi \right)}
\newcommand{\gameNF}{\mathcal{G}}
\newcommand{\BR}{\mathrm{BR}}

\maketitle
%\thispagestyle{empty}
%\IEEEpeerreviewmaketitle
% As a general rule, do not put math, special symbols or citations
% in the abstract
%\vspace{-10mm}
\begin{abstract}
We study the recovery of missing data from multiple smart grid datasets within a matrix completion framework. The datasets contain the electrical magnitudes required for monitoring and control of the electricity distribution system. Each dataset is described by a low rank matrix. Different datasets are correlated as a result of containing measurements of different physical magnitudes generated by the same distribution system. To assess the validity of matrix completion techniques in the recovery of missing data, we characterize the fundamental limits when two correlated datasets are jointly recovered. We then proceed to evaluate the performance of Singular Value Thresholding (SVT) and Bayesian SVT (BSVT) in this setting. We show that BSVT outperforms SVT by simulating the recovery for different correlated datasets. The performance of BSVT displays the tradeoff behaviour described by the fundamental limit, which suggests that BSVT exploits the correlation between the datasets in an efficient manner.

%Monitoring procedures in the smart grid face new challenges posed by the turbulent nature of integrated renewable energy sources. Providing complete sets of observations is paramount for the successful application of the control strategies envisioned by the smart grid. Moreover, the expected expansion of the sensing technology for infrastructure points towards an in increase in the number of datasets generated in smart grids. In this context, a joint recovery setting is proposed to enable missing data recovery techniques to leverage on different types of data in the estimation process. We characterize the fundamental limits for joint recovery of missing entries on two correlated datasets. This facilitates the numerical comparison between different matrix completion-based algorithms within a joint recovery context. Numerical results show that the performance of the recently proposed Bayesian Singular Value Thresholding (BSVT) algorithm exhibits the performance tradeoff between datasets predicted by the fundamental limit, which suggests that BSVT successfully exploits the correlation between datasets.
\end{abstract}

\keywords{smart grid, matrix completion,  missing data recovery, correlated data}
%\vspace{-2mm}
\section{Introduction}
The integration of residential low carbon energy sources such as solar or wind power generates bidirectional power flows that affect the stability of the smart grid \cite{WSDBK08}. The control strategies need to adapt to the new challenges posed by the additional distributed energy sources. In this context, the monitoring procedures are expected to manage the dynamic and unknown scenarios and to provide timely and accurate data describing the state of the grid. 
For example, the lack of data quality in power systems contributed towards several large-scale blackouts such as the 2003 U.S.-Canadian blackout \cite{us2004final} and the 2003 Italy blackout \cite{ucte2004final}.
In addition, the integration of the Internet of things into the smart grid will significantly increase the number of datasets \cite{BCCZ12}. In practical scenarios, state estimation and monitoring systems face challenges like data injection attacks \cite{LNR11}, \cite{KP11}, \cite{KJTT11}, \cite{OEVKP16}, \cite{SEPP17} or missing data \cite{GWGCFS16}, \cite{GEPOC16}, \cite{GEPOC17}. Telemetry errors such as sensor failures or communication issues lead to incomplete sets of observations that do not fully describe the state of the grid. Therefore, it is vital to estimate the missing data based on the available observations. For instance, accurate measurements are necessary to implement centralized control schemes for voltage regulation in distribution systems \cite{IYFIOOH16}.

Matrix completion (MC) is proposed in \cite{CR09} as technique to recover missing data from partial observations. MC-based recovery exploits the fact that correlated state variable vectors give rise to approximately low rank data matrices. Specifically, in a convex optimization context, a low rank matrix is estimated given that a sufficient fraction of the entries is observed. See for instance \cite{CP10} and \cite{CT10}. However, when the number of observations is insufficient, the recovery of the data matrix is not possible. A potential way forward in this case is to attempt a joint recovery of multiple datasets by exploiting the fact that when datasets are correlated the rank of the resulting joint dataset grows in a sub-additive fashion. When the number of observations in one dataset is limited, this approach allows the estimation process to incorporate datasets produced by other sources in the system.

A framework for jointly recovering multiple datasets is provided in \cite{LMWY_13} where the MC setting is extended to the tensor case. Moreover, the singular value decomposition is extended to the tensor case in \cite{KBHH13} which leads to the development of a tensor nuclear norm based algorithm in \cite{ZEAHK14}. Alternatively, a collective MC framework is proposed in \cite{GYYC15} to exploit the correlation between matrices with shared structure. However, the common structure constraint does not allow for sufficient generality in the definition of the correlation structure between datasets in a smart grid context.

This paper proposes an estimation setting in which data from multiple datasets is combined into a single data matrix that is recovered using MC-based algorithms. This allows the recovery process to exploit not only correlations within a dataset but also between datasets in the joint estimation paradigm. Specifically, the correlation between datasets is leveraged to facilitate the recovery when the number of observations in one dataset is limited.
In addition, the fundamental limit of the joint recovery setting for two correlated datasets is characterized within an MC framework and, based on the geometry dictated by the fundamental limit, the joint recovery performance of two MC-based algorithms is benchmarked for different levels of correlation between the combined datasets. Numerical results show that the recently proposed Bayesian Singular Value Theresholding (BSVT) algorithm \cite{GEPOC17} is more effective in exploiting the correlation between datasets when compared to the Singular Value Theresholding (SVT) algorithm \cite{CCS10}.
%\vspace{-2mm}
\section{System Model}

Consider an electricity distribution system with $N$ low voltage (LV) feeders.  At the head of each feeder, a sensing unit measures various electrical magnitudes, e.g., voltage, intensity, active and reactive power at given time instants. These measures comprise the state variables that the operator uses for control, monitoring, and management purposes.  The set of observations available to the operator is incomplete and corrupted by noise. The operator estimates the missing data based on the available observations.
In the following, the analysis is carried out for a particular electrical magnitude, i.e., phase voltage.
%\vspace{-3mm}
\subsection{Source Model}

For a given phase voltage state variable, let $m_{i,j}^{(s)}$ be the corresponding value on phase $s \in \{A, B, C\}$, at feeder $i \in \{1,2, ...,N\}$ and time $j \in \{1,2, ...,M\}$. 
The matrix with the measurements for phase $s$, denoted by $\Mm^{(s)}\in \mathbb{R}^{M \times N}$, contains the aggregated measurement vectors from all feeders
\begin{equation} \label{eq:state_matrix}
\Mm^{(s)}\eqdef[\mv_{1}^{(s)}, \mv_{2}^{(s)}, ...,\mv_{N}^{(s)}],
\end{equation}
where the measurement vectors are given by
\begin{equation} \label{eq:state_vector}
\mv_{i}^{(s)}\eqdef [m_{i,1}^{(s)}, m_{i,2}^{(s)}, ...,m_{i,M}^{(s)}]^{\sf T} \in \mathbb{R}^{M}.
\end{equation}
The resulting data matrices $\Mm^{(\text{A})}$, $\Mm^{(\text{B})}$, $\Mm^{(\text{C})}$ contain the voltage measurements on phase A, B and C respectively, at time instants $1,2,...,M$ for all $N$ feeders. 
%\vspace{-2mm}
\subsection{Real data model}

Real data collected as part of the {``Low Voltage Network Solutions"} project run by Electricity North West Limited (ENWL) \cite{enwl}, is used in the following to model the statistical structure of the random process governing the phase voltage state variables. The dataset contains voltage measurements of phases A, B and C collected from 200 residential secondary substations across North West of England from June 2013 to January 2014. Each substation generates a daily file that contains the voltage measurements on all three phases. 

An analysis of the distribution and sample covariance matrix of the phase A voltage measurements in the LV dataset under consideration is presented in \cite{GEPOC16}. Therein, it is shown that voltage measurements can be modelled as a multivariate Gaussian random process for $i \in \{1,2, ...,N\}$. Specifically, we model the voltage measurements as 
\begin{equation} \label{eq:Gmodel}
\mv_i^{(s)} {\sim} \Nc (\muv_s,{\Sigmam_s}),
\end{equation}
and ${\mv_i}$ for $i \in \{1,2, ...,N\}$, is a sequence of independent and identically distributed random variables.
Moreover, it is also shown in \cite{GEPOC16} that the sample covariance matrix for phase A exhibits a structure that is approximately Toeplitz. In addition, because the voltage data is correlated, the covariance matrix displays a high correlation across feeders and time instants. It is shown in \cite{GEPOC17} that the singular value decomposition of a $500 \times 500$ matrix with phase A voltage measurements has a large condition number \cite{seber}.

%Precisely, the first five singular values in decreasing order concentrate $98.78\%$ of the matrix nuclear norm which suggests that missing data can be recovered using low rank minimization tools.

\begin{figure}[!t]
\centering
\subcaptionbox{Sample covariance matrix of the phase B voltage data matrix. \label{fig:vb_cov}}
{\includegraphics[width=0.23\textwidth]{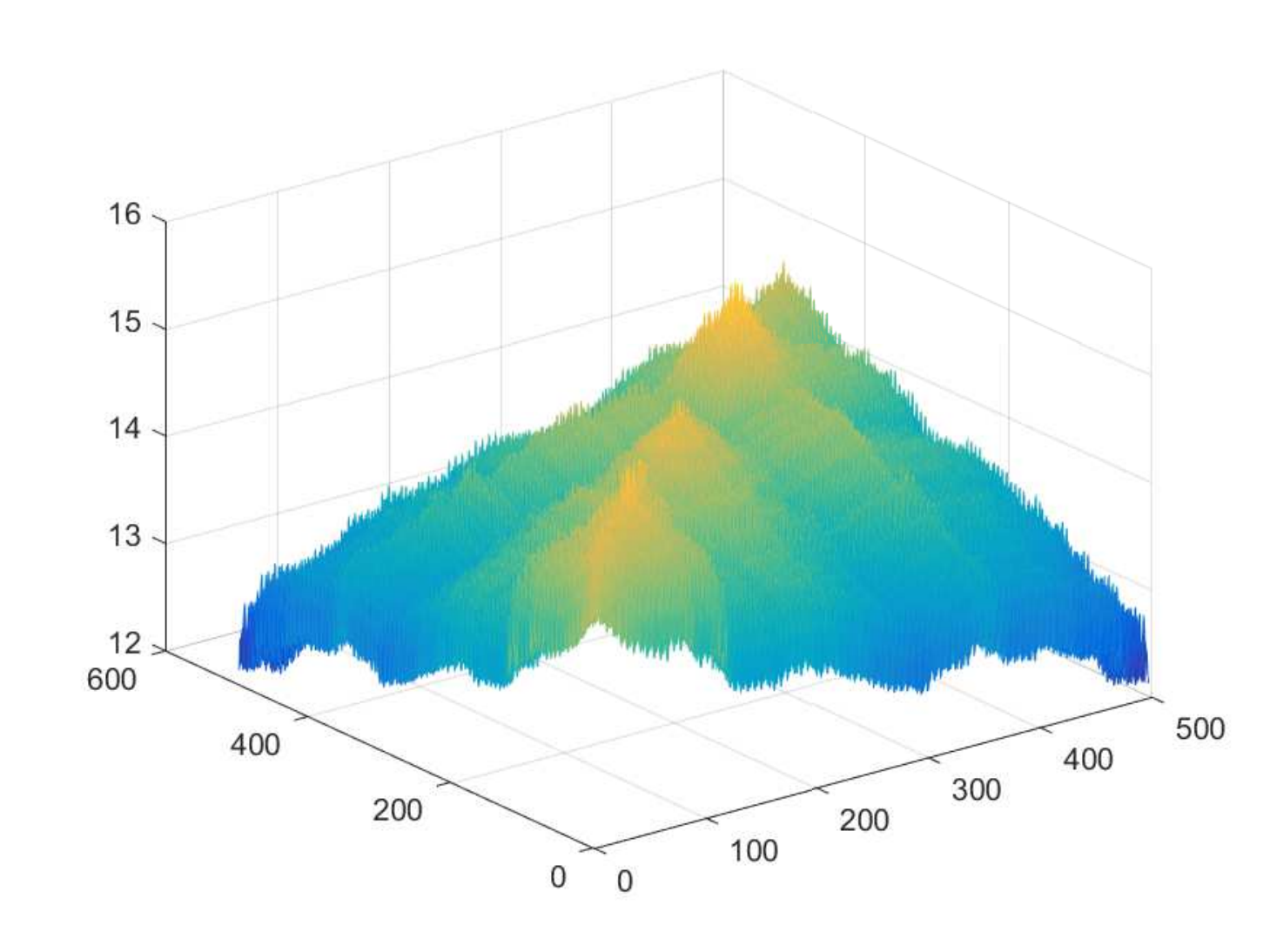}}
\hfill
\subcaptionbox{Sample covariance matrix of the combined phase B and C voltage data matrices. \label{fig:vbc_cov}}
{\includegraphics[width=0.23\textwidth]{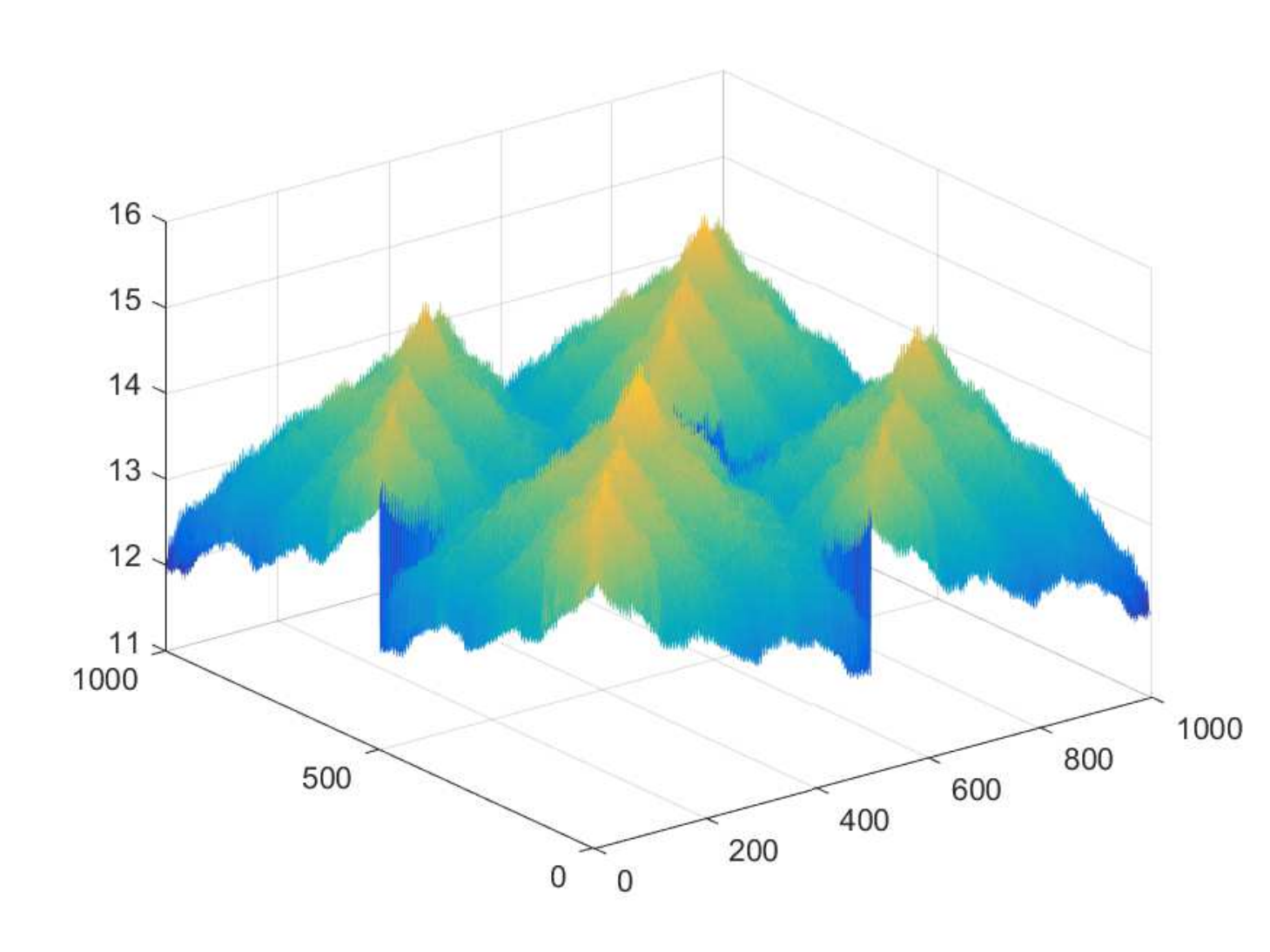}}
\caption{Sample covariance matrices obtained using the real data provided by ENWL.}
\vspace{-5mm}
\end{figure}

Part of the LV data collected by ENWL is used to construct two complete data matrices $\Mm^{(\text{B})}$ and $\Mm^{(\text{C})}$ with $M=N=500$ that contain phase B and phase C voltage measurements from the LV grid. The sample covariance matrix of the data matrix $\Mm^{(\text{B})}$ is depicted in Fig. \ref{fig:vb_cov}. 
As expected, and in agreement with the observation in \cite{GEPOC16}, the sample covariance matrix for the phase voltage data exhibits a structure that is approximately Toeplitz.
In addition, when the phase B and phase C data matrices are combined into a single data matrix, i.e., $\Mm^{(\text{BC})}=[\Mm^{(\text{B})},\, \Mm^{(\text{C})}]^{\sf T}$, the resulting sample covariance matrix is depicted in Fig. \ref{fig:vbc_cov}. Interestingly, the sample covariance for the combined matrix is a block matrix with four elements where each element exhibits a structure that is approximately Toeplitz. Based on this observation, the following section proposes a general model for correlated voltage datasets generated by different phases in smart grid systems.
%\vspace{-2mm}
\subsection{Synthetic data model} \label{sec:syn_data_model}

A mathematical description of the model used to generate two correlated synthetic datasets follows. Let us denote the data matrix for the first dataset by $\Mm_1 \in \mathbb{R}^{M \times N}$ and the data matrix for the second dataset by $\Mm_2\in \mathbb{R}^{M \times N}$. In this setting, the combined matrix is denoted by $\Mm \in \mathbb{R}^{2M \times N}$ given by
\begin{equation}
\Mm\eqdef
\left[
\begin{array}{c}
\Mm_1\\
\Mm_2
\end{array}
\right].
\end{equation}     
Hence, the combined state variable matrix is defined as
\begin{equation} 
\begin{split}
\Mm=[\mv_1, \mv_2, ..., \mv_N],
\end{split}
\end{equation}
where each state variable vector $\mv_i \in \mathbb{R}^{2M}$ for $i \in \{1,2, ..., N\}$ is generated by a multivariate Gaussian process with $\mathbf{0}$ mean and covariance matrix $\Sigmam$, i.e.,
\begin{equation} 
\begin{split}
\mv_i {\sim} \Nc(\mathbf{0},\mathbf{\Sigma}).
\end{split}
\end{equation}
The covariance matrix $\Sigmam$ is a block matrix in which block $\Sigmam_{ll}$ is a Toeplitz matrix describing the covariance matrix of the dataset $l \in \{1,2\}$. The resulting covariance matrix is given by
\begin{equation} \label{eq:synth_data_model}
\Sigmam \eqdef
\left[
\begin{array}{c c}
\Sigmam_{11} & \psi \Sigmam_{11} \\
\psi \Sigmam_{11}  & \Sigmam_{22} \\
\end{array}
\right],
\end{equation}
where $\Sigmam_{ll} \in \mathbb{R}^{M \times M}$ and $\psi \in [0,1]$.
In this framework, the elements of $\Sigmam_{ll}$ are defined as
\begin{equation} 
\begin{split}
(\Sigmam_{ll})_{i,j} \eqdef \rho^{\frac{1}{\zeta_{ll}}|i-j|},
\end{split}
\end{equation}
where $(\Sigmam_{ll})_{i,j}$ denotes the entry in row $i$ and column $j$ of the matrix $\Sigmam_{ll}$ with $i \in \{1, 2,  \ldots, M\}$, $j \in \{1, 2, \ldots, M\}$, $\rho \in (0,1)$ and $\zeta_{ll}$ a design parameter. Hence, the matrix $\Sigmam_{ll}$ is given by
\begin{equation}  \label{eq:sigma_ll}
\Sigmam_{ll}=\textnormal{Toeplitz}(1, \ldots, \upsilon_{ll}),
\end{equation}
where $\upsilon_{ll} \in [0,1)$ obeys
\begin{equation}  \label{eq:end_corr}
\begin{split}
\upsilon_{ll} = \rho^{\frac{1}{\zeta_{ll}}(M-1)}.
\end{split}
\end{equation}

\begin{figure}[t!]
\centering
\includegraphics[width=.48\textwidth]{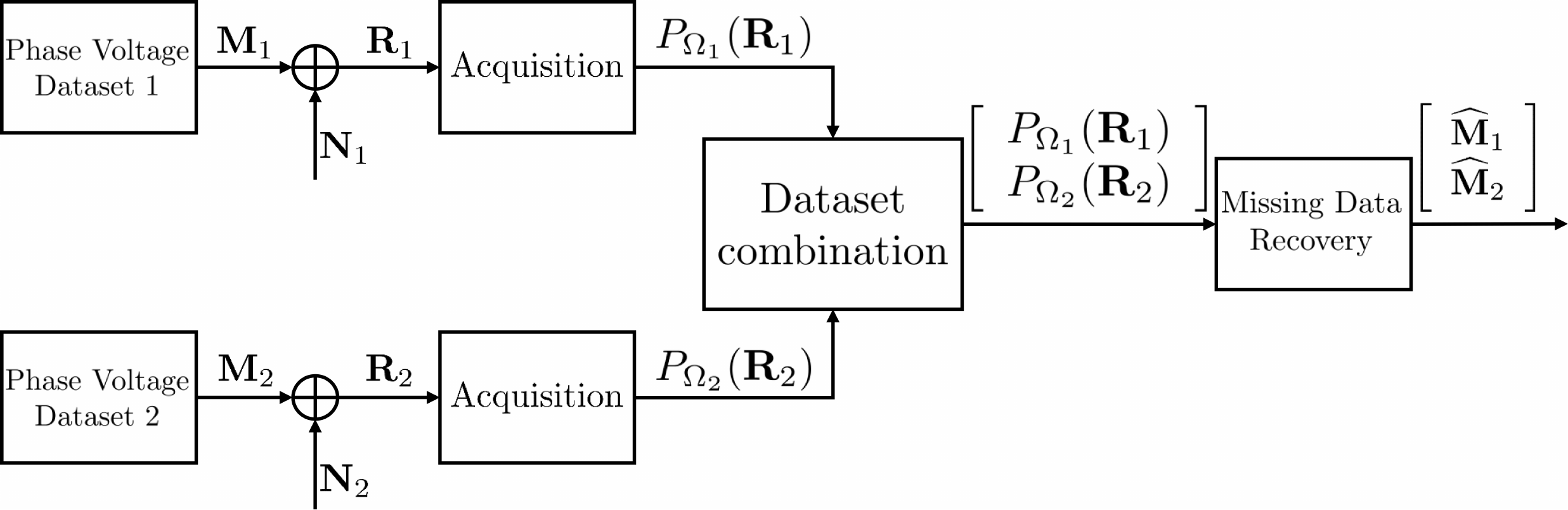}
\caption{Block diagram describing the system model for the joint recovery of two datasets.}
\label{fig:block}
\vspace{-5mm}
\end{figure}

%\alert{Fig. \ref{fig:block} describes the LV distribution system monitoring model. In this setting, the phase voltage measurements describing the state of the system are modelled as a random process that generates a realization $\Mm \in \mathbb{R}^{M \times N}$ every $M$ time instants. The state of the grid is fully described by the entries of the matrix $\Mm$.
%}
Fig. \ref{fig:block} describes the system model for the joint recovery of two datasets produced in an LV distribution system.
In this setting, each phase voltage data matrix fully describes the state of the grid over $M$ time instants and across $N$ feeders. However, in the acquisition process, part of the measurements are lost and the ones that are available are corrupted by noise. The missing data recovery strategy needs to estimate the actual state of the grid for a noisy subset of observations.
%\vspace{-3mm}
\subsection{Acquisition}

The phase voltage measurements are assumed to be corrupted by additive white Gaussian noise (AWGN) such that for each dataset the resulting observations are given by
\begin{equation}
\Rm_l=\Mm_l+\Nm_l,
\end{equation}
where $l \in \{1,2\}$ denotes the number of datasets and
\begin{equation} \label{eq:noise}
(\Nm_l)_{i,j} \sim \Nc(0,\sigma^2_{\Nm_l}),
\end{equation}
where $i \in \{1,2,...,M\}$ and $j \in \{1,2,...,N\}$. Moreover, it is also assumed that only a fraction of the complete set of observations (entries in $\Rm_l$) are communicated to the operator. Denote by $\Omega_l$ the subset of observed entries of the dataset $l$, i.e.,
\begin{equation}
\Omega_l \eqdef \{(i,j) : (\Rm_l)_{i,j} \textnormal{ is observed}\}.
\end{equation}
Formally, the acquisition process is modelled by the functions $P_{\Omega_l} :  \mathbb{R}^{M \times N} \to  \mathbb{R}^{M \times N}$ with $l \in \{1,2\}$ and
\begin{equation} \label{eq:obs}
P_{\Omega_l}(\Rm_l)=
\begin{cases}
(\Rm_l)_{i,j}, \quad &(i,j) \in \Omega_l, \\
0, &\textnormal{otherwise}.
\end{cases}
\end{equation}
The observations given by (\ref{eq:obs}) describe all the data from dataset $l$ that is available to the operator for estimation purposes. Therefore, the recovery of the missing data is performed from the observations $P_{\Omega_l}(\Rm_l)$.
As depicted in Fig. \ref{fig:block}, the acquisition step is performed independently for each dataset. After the acquisition step, the available observations from each dataset are combined into a single data matrix, i.e., 
\begin{equation}
P_{\Omega}(\Rm)=
\left[
\begin{array}{c}
P_{\Omega_1}(\Rm_1)\\
P_{\Omega_2}(\Rm_2)
\end{array}
\right] \in \mathbb{R}^{2M \times N},
\end{equation}
where $\Omega$ denotes the combined set of available observations from the two datasets. The resulting matrix $P_{\Omega}(\Rm)$ is used for estimation purposes in the joint recovery paradigm.
%\vspace{-2mm}
\subsection{Estimation}

The estimation process for the combined matrix of measurements, based on the available observations from each dataset is modelled by the function $g:\mathbb{R}^{2M \times N} \to \mathbb{R}^{2M \times N}$, where $\Omega$ denotes the combined set of available observations from both datasets.
The estimate $\widehat{\Mm}=g\big(P_{\Omega_1}(\Rm_1),P_{\Omega_2}(\Rm_2)\big)$ is obtained by solving an optimization problem based on a given optimality criterion. In the following, the optimality criterion is the normalized mean square error (NMSE) given by
\begin{equation} \label{eq:nmse}
\textnormal{NMSE} \left( \Mm;g \right)= \frac{\mathbb{E}\left[ \|\Mm-g\big(P_{\Omega_1}(\Rm_1),P_{\Omega_2}(\Rm_2)\big)\|^2_F\right ]}{\|\Mm\|_F^2},
\end{equation}
where $\|\cdot\|_F$ denotes the Frobenius norm. 
%\vspace{-2mm}
\section{Recovering missing data using matrix completion}

Given a matrix $\Mm$ of size $2M \times N$, and observations $P_{\Omega}(\Mm)$, the recovery of the missing entries is not feasible in the general case. However, when $\Mm$ is low rank or approximately low rank, it is shown in \cite{CR09} that if the entries on $\Omega$ are sampled uniformly at random, the missing entries are recovered with high probability by solving the following optimization problem:
\begin{equation} \label{convex_lr}
\begin{aligned}
& {\underset{\mathbf{X}}{\text{minimize}}}
& & \mathrm \lVert \mathbf{X} \rVert_{*} \\
& \text{subject to}
& & P_{\Omega}(\mathbf{X})=P_{\Omega}(\mathbf{M}),\\
\end{aligned}
\end{equation}
where $\lVert \mathbf{X} \rVert_{*}$ denotes the nuclear norm of the matrix $\Xm$. To simplify the notation, let us assume that $2M \geq N$. We proceed to present the two MC-based algorithms used to assess the joint recovery performance. Namely, the SVT algorithm proposed in \cite{CCS10} and the BSVT approach presented in \cite{GEPOC17}.
%\vspace{-2mm}
\subsection{Singular Value Theresholding}

SVT is an MC-based algorithm \cite{CCS10} which produces a sequence of matrices $\Xm^{(k)}$ that converges to the unique solution of the following optimization problem:
\begin{equation} \label{svt}
\begin{aligned}
& {\underset{\mathbf{X}}{\text{minimize}}}
& & \tau \mathrm  \lVert \mathbf{X} \rVert_{*}+ \frac{1}{2}\lVert \mathbf{X} \rVert_{F}^{2} \\
& \text{subject to}
& & P_{\Omega}(\mathbf{X})=P_{\Omega}(\mathbf{M}),\\
\end{aligned}
\end{equation}
Note that when $\tau \to \infty$, the optimization problem in (\ref{svt}) converges to the nuclear norm minimization problem in (\ref{convex_lr}). The iterations of the SVT algorithm are:
\begin{equation} \label{eq:inter_svt}
\begin{cases}
\mathbf{X}^{(k)}= D_{\tau}(\mathbf{Y}^{(k-1)}), \\
\mathbf{Y}^{(k)}=\mathbf{Y}^{(k-1)}+\delta_s \big( P_{\Omega}(\mathbf{M})-P_{\Omega}(\mathbf{X}^{(k)}) \big),\\
\end {cases}
\end{equation}
where $\mathbf{Y}^{(0)}=\mathbf{0}$ is used for initialization, $\delta_s$ is the step size that obeys $0 < \delta_s < 2$, and the soft-thresholding operator, $D_{\tau}$ that shrinks the singular values of $\mathbf{Y}^{(k-1)}$ towards zero \cite{CCS10}. 
%%
%Note that the index $k$ does not denote the power but an iteration index.

Interestingly, the choice of $\tau$ is important to guarantee a successful recovery, since large values guarantee a low-rank matrix estimate but for values larger than $\underset{i}{\textnormal{max}} \,( \sigma_i(\Ym))$ all the singular values vanish. In \cite{CCS10}, the proposed threshold is $\tau =5N$.
However, simulation results presented in \cite{GEPOC16} show that $\tau =5N$ gives suboptimal performance when the number of missing entries is large.
The main shortcoming of the SVT algorithm is the lack of guidelines for tuning the threshold $\tau$. This problem is addressed in \cite{GEPOC17} where a new algorithm is proposed to adapt the recovery to the dataset by leveraging knowledge of the second order statistics.
%\vspace{-2mm}
\subsection{Bayesian Singular Value Theresholding}

BSVT is an MC-based algorithm \cite{GEPOC17} that is able to optimize the value of $\tau$ at each iteration using additional prior knowledge in the form of second order statistics. The optimization of the soft-theresholding step is performed using Stein\textquotesingle s unbiased risk estimate (SURE) \cite{stein81} for which a closed-form expression is presented in \cite{CST13}. However, the result therein pertains to input matrices $\mathbf{Z}$ that accept the following model:
\begin{equation} \label{eq:sure_model}
\Zm=\Mm+\Wm,
\end{equation}
where the entries of $\Wm$ are
\begin{equation}
(\Wm)_{i,j} \overset{iid}{\sim} \Nc(0,\sigma_{\Zm}^2),
\end{equation}
where $\sigma_{\mathbf{Z}}^2$ is the variance of the $(\mathbf{W})_{i,j}$ entries with $\ssqeq i\in\{1,2, \ldots, 2M\}$ and $\ssqeq j\in\{1,2, \ldots, N\}$.
Using the prior knowledge in the form of the second order statistics, the BSVT algorithm, computes the matrix $\Zm$ at iteration $k$, i.e., $\Zm^{(k)}$, as
\begin{equation}
\Zm^{(k)}=\Ym^{(k)}+\Lm^{(k)},
\end{equation}
where $\Ym^{(k)}$ is defined in (\ref{eq:inter_svt}) and $\Lm^{(k)}$ is the linear minimum mean square error (LMMSE) estimate.
%%%
% given by
%\begin{equation}
%\Lm^{(k)}=P_{\Omega^c}(\muv) + \Sigmam_{\Omega^c \Omega} \Sigmam_{\Omega \Omega}^{-1} (P_{\Omega}(\Ym^{(k)})-P_{\Omega}(\muv)),
%\end{equation}
%where $\Omega$ is the set of observed entries, $\Omega^c$ is the set of missing entries, $\mathbf{\Sigma}_{\Omega^c \Omega}$ is the covariance matrix between the entries in $\Omega^c$ and the entries in $\Omega$ and  $\mathbf{\Sigma}_{\Omega \Omega}$ is the covariance matrix of the entries in $\Omega$.
%%
Consequently, the incorporation of the LMMSE step into the structure of the BSVT algorithm facilitates the use of SURE \cite{stein81} which is given by
\begin{equation} \label{sure}
\sqeq
\begin{split}
\textnormal{SURE}(D_{\tau})(\Zm) = &-2MN \sigma_{\Zm}^2 + \sum_{i=1}^{N} \textnormal{min}(\tau^2,\sigma_i^2(\Zm)) \\
&+ 2 \sigma_{\Zm}^2 \textnormal{div}(D_{\tau}(\Zm)),\\
\end{split}
\end{equation}
where $\sigma_i(\Zm)$ is the $i$-th singular value of $\Zm$ for $i \in \{1,2,\ldots, N\}$. 
A closed-form expression for the divergence of this estimator is obtained in \cite{CST13}. For the case in which $\Zm\in\mathbb{R}^{2M\times N}$, the divergence is given by
\begin{equation} \label{div}
\ssqeq
\begin{split}
\textnormal{div}(D_{\tau}(\Zm)) = & \sum_{i=1}^{N} \bigg[\mathds{1}(\sigma_i(\Zm)>\tau)+ (2M-N)\frac{(\sigma_i(\Zm)-\tau)_+}{\sigma_i(\Zm)} \bigg]\\ 
& +2 \sum_{ i \neq j,i,j=1}^{N} \frac{\sigma_i(\Zm) (\sigma_i(\Zm)-\tau)_+}{\sigma_i^2(\Zm)-\sigma_j^2(\Zm)},
\end{split}
\end{equation}
where $\mathds{1}(\cdot) $ denotes the indicator function. In addition, when $\Zm$ has repeated singular values, the divergence in (\ref{div}) is set to be zero.
%%
%, that is
%\begin{equation}
%\mathds{1}(\sigma_i(\Zm)>\tau)=
%\begin{cases}
%&1, \quad \textnormal{if } \,\sigma_i(\Zm)>\tau, \\
%&0, \quad \textnormal{if } \,\sigma_i(\Zm)\leq\tau.\\
%\end{cases}
%\end{equation}
%%
The proposed algorithm approximates $\sigma_{\Zm}^2$ with the weighted sum of the noise in $\Omega$ and in $\Omega^c$.
%%%
%, i.e., $\sigma_{\Zm^{(k)}}^2$ is calculated as
%\begin{equation} \label{eq:sigma_zk}
%\sigma_{\Zm^{(k)}}^2= \frac{\|\Ym^{(k)}-P_{\Omega}(\Mm)\|_F^2 + |\Omega^c| {D}_{\textnormal{LMMSE}}}{2MN},
%\end{equation}
%where ${D}_{\textnormal{LMMSE}}$ represents the average noise per entry in $\Omega^c$.
%To facilitate this, prior knowledge into the form of the covariance matrix $\Sigmam$ is incorporated into the structure of the BSVT algorithm.
%%
A detailed description of the BSVT algorithm is presented in \cite{GEPOC17} but we reproduce the algorithm below to aid with the presentation. Note that ${D}_{\textnormal{LMMSE}}$ represents the average noise per entry in $\Omega^c$.
The main advantage of the BSVT algorithm is that the threshold is optimized at each iteration. This is achieved by incorporating the prior knowledge about the matrix in the form of second order statistics via the introduction of the SURE and LMMSE steps. Admittedly, this approach requires additional knowledge that is not necessary when using the SVT algorithm. However, it is shown in \cite{GEPOC17} that the introduction of the prior knowledge enables a robust recovery of the missing entries.
%\vspace{-1.5mm}
\begin{algorithm}
\caption{Bayesian Singular Value Thresholding}\label{alg:bsvt}
\begin{algorithmic}[1]
\Require set of observations $\Omega$, observed entries $P_{\Omega}(\Rm)$, mean $\mathbf{0}$, covariance matrix $\Sigmam$, step size $\delta_b$, tolerance $\epsilon$, and maximum iteration count $k_{\textnormal{max}}$
\Ensure $\widehat{\Mm}_{\textnormal{BSVT}}$
\State Set $\Ym^0=\mathbf{0}$
\State Set $\Zm^0=\mathbf{0}$
\State Set $\tau=0$
\State Set $\Omega^c=\{1,2,...,2M\} \times \{1,2,...,N\}\setminus \Omega$
\For {$k=1$ to $k_{\textnormal{max}}$}
\State Compute $[\Um, \Sm, \Vm]=\textnormal{svd}(\Zm^{(k-1)})$
\State Set $\Xm^{(k)}= \sum_{j=1}^{N} \textnormal{max}(0,\sigma_j(\Zm^{(k-1)})-\tau^{(k-1)})\uv_j \vv_j$
\If  {$\|P_{\Omega}(\Xm^{(k)}-\Rm)\|_F / \|P_{\Omega}(\Rm)\|_F \leq \epsilon$} {\bf break} \EndIf
\State Set $\Ym^{(k)}=\Ym^{(k-1)}+ \delta_b \big( P_{\Omega}(\mathbf{R})-P_{\Omega}(\mathbf{X}^{(k)}) \big)$
\State Set $\Lm^{(k)}=\Sigmam_{\Omega^c \Omega} \Sigmam_{\Omega \Omega}^{-1} \Ym^{(k)}$
\State Set $\Zm^{(k)}=\Ym^{(k)}+\Lm^{(k)}$
%\State Set $\sigma_{\Zm^{(k)}}^2 = \frac{\|\Ym^{(k)}-P_{\Omega}(\Rm)\|_F^2 + |\Omega^c| {D}_{\textnormal{LMMSE}}}{2MN}$ 
\State Set \begin{small}$\sigma_{\Zm^{(k)}}^2 = (\|\Ym^{(k)}-P_{\Omega}(\Rm)\|_F^2 + |\Omega^c| {D}_{\textnormal{LMMSE}})/2MN$  \end{small}
\State Set $\tau^{(k)}=\underset{\tau}{\textnormal{arg\,min}} \, \textnormal{SURE}(D_{\tau})(\Zm^{(k)})$ 
\EndFor
\State Set $\widehat{\Mm}_{\textnormal{BSVT}}= \Xm^{(k)}$
\end{algorithmic}
\end{algorithm}  \vspace{-1mm}
%The same amount of prior knowledge, i.e., covariance matrix, is required by the LMMSE estimator. Still, in contrast to the LMMSE approach, the BSVT algorithm is more robust to prior knowledge that is mismatched with respect to the actual statistics \cite{GEPOC17}. 

%\vspace{-3mm}
\section{Joint recovery of missing data in two datasets}

In this section, an estimation framework for recovering missing data from different datasets is proposed. The estimation framework facilitates considering the correlation between datasets for a wide range of correlation structures. We begin by noting that in the joint recovery case, there are two types of correlation between the entries of the combined matrix. First, the intra-correlation that refers to the correlation between the entries within each dataset. This is the type of correlation that is exploited in the independent recovery scenario, i.e., when the missing entries from each dataset are recovered using only available observations from that dataset. Second, the cross-correlation defined as the correlation between the data points from the two different datasets. In contrast to the independent recovery case, a joint recovery technique needs to account for both types of correlation. By considering the cross-correlation, the recovery process leverages on other types of data in order to recover the datasets with limited available observations.

In an MC setting, the minimum number of observations required depends on the size and the rank of the matrix \cite{CR09}. The combination of the datasets into a single matrix increases the size of the matrix, and therefore, the fundamental limit for the joint recovery case depends on the tradeoff between the size and the rank of the combined matrix, and the number of observations available for each dataset.
Note that the rank of the combined matrix depends on both the intra and the cross-correlation.
The following lemma provides lower and upper bounds for the rank of the combined matrix based on the individual rank of the matrices. To that end, let us denote rank of the matrices by $\textnormal{rank}(\Mm_1)=r_1$, $\textnormal{rank}(\Mm_2)=r_2$, and $\textnormal{rank}(\Mm)=r$.

\begin{lemma} \label{bounds_on_rank_M}
Let  $\Mm_1 \in \mathbb{R}^{M \times N}$ and $\Mm_2 \in \mathbb{R}^{M \times N}$. Define the combined matrix $\Mm=\left[
\begin{array}{c}
\Mm_1\\
\Mm_2
\end{array}
\right] \in \mathbb{R}^{2M \times N}$. Then, the following holds:
\begin{equation} \label{eq:lemma1}
\textnormal{max}(r_1,r_2) \leq r \leq r_1+r_2.
\end{equation}
\end{lemma}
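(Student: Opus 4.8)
The plan is to prove the two inequalities in (\ref{eq:lemma1}) separately; both are elementary facts about the row spaces of the matrices, regarded as linear subspaces of $\mathbb{R}^N$. The vertical stacking of $\Mm_1$ and $\Mm_2$ preserves the number of columns, so it is the row spaces, not the column spaces, that combine cleanly, and the whole argument should be phrased accordingly.

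For the lower bound $\textnormal{max}(r_1,r_2) \leq r$, observe that $\Mm_1$ is exactly the submatrix of $\Mm$ formed by its first $M$ rows and $\Mm_2$ the submatrix formed by its last $M$ rows. Every row of $\Mm_1$ and every row of $\Mm_2$ is therefore a row of $\Mm$, so the row space of $\Mm_1$ and the row space of $\Mm_2$ are each linear subspaces of the row space of $\Mm$. Passing to dimensions gives $r_1 \leq r$ and $r_2 \leq r$, hence $\textnormal{max}(r_1,r_2) \leq r$. Equivalently, deleting rows from a matrix cannot increase its rank.

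For the upper bound $r \leq r_1 + r_2$, write $\Mm = \Mm_1' + \Mm_2'$, where $\Mm_1' \in \mathbb{R}^{2M \times N}$ carries the rows of $\Mm_1$ in its first $M$ rows and zeros below, and $\Mm_2'$ carries zeros in its first $M$ rows and the rows of $\Mm_2$ below. Appending zero rows does not change the rank, so $\textnormal{rank}(\Mm_1') = r_1$ and $\textnormal{rank}(\Mm_2') = r_2$; subadditivity of the rank under matrix addition then yields $r = \textnormal{rank}(\Mm_1' + \Mm_2') \leq \textnormal{rank}(\Mm_1') + \textnormal{rank}(\Mm_2') = r_1 + r_2$. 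An equivalent route is to note that the row space of $\Mm$ equals the sum $\mathcal{R}_1 + \mathcal{R}_2$ of the row spaces of $\Mm_1$ and $\Mm_2$ inside $\mathbb{R}^N$, and $\dim(\mathcal{R}_1 + \mathcal{R}_2) \leq \dim \mathcal{R}_1 + \dim \mathcal{R}_2 = r_1 + r_2$.

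There is no genuine obstacle in this argument; the only point requiring a little care is to keep everything at the level of row spaces, since the matrices are stacked vertically. It is worth recording that the inequality $\dim(\mathcal{R}_1 + \mathcal{R}_2) \leq \dim\mathcal{R}_1 + \dim\mathcal{R}_2$ is the quantitative form of the ``sub-additive'' growth of the rank mentioned earlier: the deficit $r_1 + r_2 - r$ equals $\dim(\mathcal{R}_1 \cap \mathcal{R}_2)$, the dimension of the shared row structure induced by the cross-correlation between the two datasets. Both extremes are attainable — taking $\Mm_2 = \Mm_1$ gives $r = r_1 = r_2 = \textnormal{max}(r_1,r_2)$, while choosing $\Mm_1,\Mm_2$ whose row spaces meet only at the origin (possible when $r_1 + r_2 \leq N$) gives $r = r_1 + r_2$ — although the lemma as stated does not require the bounds to be tight.
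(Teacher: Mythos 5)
Your proof is correct, and it rests on the same underlying fact as the paper's: the row spaces of $\Mm_1$ and $\Mm_2$ sit inside $\mathbb{R}^N$ and the row space of $\Mm$ is their sum. The paper's proof is shorter in presentation: it cites the exact identity $r = r_1 + r_2 - d$ with $d = \textnormal{dim}[{\Cc}(\Mm_1^{\sf T}) \cap {\Cc}(\Mm_2^{\sf T})]$ (Seber 3.23(a)) and then bounds $0 \leq d \leq \textnormal{min}(r_1,r_2)$, obtaining both inequalities at once. You instead prove the two bounds separately and from first principles --- the lower bound via submatrix rank monotonicity, the upper bound via zero-padding and subadditivity of rank under addition --- and only recover the exact identity $r_1 + r_2 - r = \dim(\mathcal{R}_1 \cap \mathcal{R}_2)$ as a closing remark. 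What your route buys is self-containedness (no appeal to a textbook identity) and the explicit observation that both extremes of the bound are attainable; what the paper's route buys is brevity and the fact that the quantity $d$ it isolates is exactly the ``shared row structure'' that the surrounding discussion of cross-correlation then interprets. Either proof is acceptable.
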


\begin{proof}
The rank of the matrix $\Mm$ is defined as (3.23(a) in \cite{seber})
\begin{equation} \label{rank_M_seber}
r=r_1+r_2-d,
\end{equation}
where $d$ is the intersection of the row subspaces of the matrices $\Mm_1$ and $\Mm_2$ given by
\begin{equation}
d=\textnormal{dim}[{\Cc}(\Mm_1^{\sf T}) \cap  {\Cc}(\Mm_2^{\sf T})],
\end{equation}
where $\Cc(\Mm)$ denotes the column subspace of the matrix $\Mm$.
Since the intersection of the row subspaces is bounded by
\begin{equation}
0 \leq d \leq \textnormal{min}(r_1,r_2),
\end{equation}
the rank of the matrix $\Mm$ satisfies (\ref{eq:lemma1}).
\end{proof}

Based on the insight provided by Lemma \ref{bounds_on_rank_M}, the intra-correlation determines the rank of the matrices $\Mm_1$ and $\Mm_2$ which define the lower and upper bounds on $r$. A smaller value of intra-correlation in one of the datasets results in a larger lower bound for $r$. On the other hand, the cross-correlation governs the value of $r$ within the limits defined by Lemma \ref{bounds_on_rank_M}. Indeed, a larger value of cross-correlation results in a value of $r$ that is closer to the lower bound while a smaller value of cross-correlation generates a combined matrix with a rank that is closer to the upper bound.
In other words, the intra-correlation defines the limit values of $r$ for which recovery is feasible and the cross-correlation governs the value of $r$ within the limit.

In \cite{RSB15} it is shown that the low rank matrices $\Mm_1$ and $\Mm_2$ can be successfully recovered independently when the number of observations for the first matrix, denoted by $k_1$, satisfies
\begin{equation} \label{eq:bound_k1}
k_1 > (M+N-r_1)r_1,
\end{equation}
and the number of available observations for the second matrix, denoted by $k_2$, obeys
\begin{equation} \label{eq:bound_k2}
k_2 > (M+N-r_2)r_2.
\end{equation}
This result is based on the assumption that for the random matrices $\Mm_1$ and $\Mm_2$ there exist the $\sigma$-measures $\mu_1$ and $\mu_2$, respectively, and that both measures admit a Lebesgue decomposition.
For the combined matrix $\Mm$, the $\sigma$-measure is obtained as the product of the measures of $\Mm_1$ and $\Mm_2$ \cite{halmos13}
\begin{equation} 
\mu=\mu_1 \times \mu_2.
\end{equation}
Moreover, since $\mu$ is a $\sigma$-measure it also admits a Lebesgue decomposition \cite{HS13} and \cite{royden88}. Hence, the result in \cite{RSB15} applies for the combined matrix $\Mm$ without any additional assumptions, i.e.,
\begin{equation} \label{eq:bound_k}
k_1+k_2 > (2M+N-r)r.
\end{equation}
\begin{figure}[t!]
\centering
\includegraphics[width=0.45\textwidth]{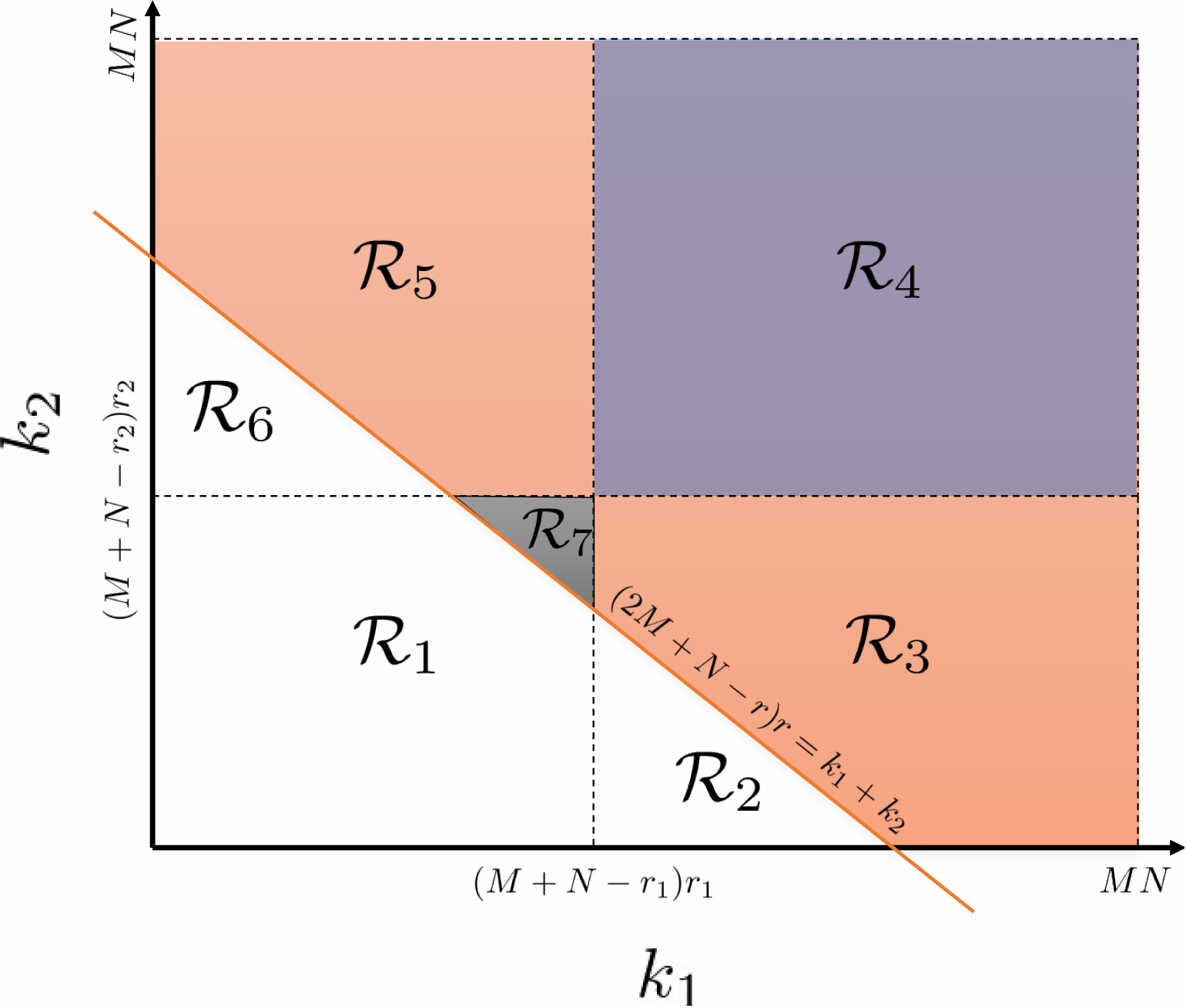}
\caption{Example of recovery regions imposed by the fundamental limit for the recovery of the matrices $\Mm_1$, $\Mm_2$ and $\Mm$.}
\label{fig:polytope_eq_w_gain}
\vspace{-5mm}
\end{figure}
Fig. \ref{fig:polytope_eq_w_gain} depicts the inequalities in (\ref{eq:bound_k1}), (\ref{eq:bound_k2}) and (\ref{eq:bound_k}) that describe the lower bound on the number of observations required to recover the matrices $\Mm_1$, $\Mm_2$ and $\Mm$, respectively. 
The bounds divide the $(k_1,k_2)$ plane into seven regions that correspond to different recovery scenarios for the independent and joint estimation settings. The seventh region, i.e., $\Rc_7$, corresponds to the case in which the independent recovery of each dataset is not possible but the joint recovery is feasible. In other words, the existence of region $\Rc_7$ is equivalent to the case in which it is beneficial to jointly recover the two datasets. This scenario is captured by the following definition.
\begin{definition} \label{def:bjr_using_r7}
The joint recovery of two matrices, $\Mm_1$, $\Mm_2 \in \mathbb{R}^{M \times N}$ of rank $r_1$ and $r_2$, respectively, is beneficial in the region given by
\begin{equation}
\sqeq
\begin{split}
\Rc_7=\big\{&(k_1,k_2) \in \mathbb{N}^2 : k_1 \leq (M+N-r_1)r_1, \\
 &k_2 \leq (M+N-r_2)r_2, \,  k_1+k_2 > (2M+N-r)r\big\}.
\end{split}
\end{equation}
\end{definition}
Note that when the rank of the combined matrix increases, the line described by $(2M+N-r)r=k_1+k_2$ is shifted towards larger values which for values of $r$ larger than a given threshold induces an empty region $\Rc_7$. The value of the threshold is given by the sufficient condition in Theorem \ref{th:NS_for_bjr}.
It is also worth noting that based on the regions depicted in Fig. \ref{fig:polytope_eq_w_gain}, the number of cases in which the joint recovery is feasible is larger than the number of cases in which the independent recovery of the datasets is possible.
The following theorem provides the necessary and sufficient conditions to guarantee that the joint recovery is beneficial.
\begin{theorem} \label{th:NS_for_bjr}
%(Necessary and sufficient condition for benefiting from the joint recovery) 
Let $\Mm_1, \Mm_2 \in \mathbb{R}^{M \times N}$, with rank $r_1$ and $r_2$. Then, the joint recovery of the two matrices requires fewer observations than the independent recovery if
\begin{equation} \label{lemma1_cond}
\begin{split}
1- \frac{\textnormal{max}(r_1,r_2)}{\textnormal{min}(r_1,r_2)} > 
\frac{\textnormal{min}(r_1,r_2)-N}{M},
\end{split}
\end{equation} 
and the rank of the combined matrix satisfies
\begin{equation} \label{eq_th_bjr}
\begin{split}
r <& M+\frac{1}{2}N - \frac{1}{2}(M+N-2r_1-2r_2) \\
&\bigg(1+ \frac{3M^2+2MN-8r_1r_2}{(M+N-2r_1-2r_2)^2}\bigg)^{1/2}.
\end{split}
\end{equation} 
\end{theorem}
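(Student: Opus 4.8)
The plan is to convert the statement ``joint recovery requires fewer observations than independent recovery'' into the single requirement that the region $\Rc_7$ of Definition~\ref{def:bjr_using_r7} be non-empty, and then to reduce that requirement algebraically to (\ref{lemma1_cond}) and (\ref{eq_th_bjr}). Write $K_1\eqdef(M+N-r_1)r_1$, $K_2\eqdef(M+N-r_2)r_2$ and $K\eqdef(2M+N-r)r$ for the thresholds appearing in (\ref{eq:bound_k1}), (\ref{eq:bound_k2}) and (\ref{eq:bound_k}). Since $M,N,r_1,r_2$ are integers, so are $K_1,K_2$, and the largest value of $k_1+k_2$ consistent with $k_1\le K_1$ and $k_2\le K_2$ is $K_1+K_2$, attained at $(k_1,k_2)=(K_1,K_2)\in\mathbb{N}^2$. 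Hence $\Rc_7\neq\emptyset$ if and only if $K_1+K_2>K$, which by Definition~\ref{def:bjr_using_r7} is precisely the meaning of ``joint recovery is beneficial.''

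Next I would expand $K_1+K_2>K$. Both sides are quadratics in the ranks, and collecting terms turns the inequality into $r^2-(2M+N)r+(M+N)(r_1+r_2)-r_1^2-r_2^2>0$, an upward parabola in $r$ with vertex at $r=M+\tfrac12N$. Under the standing assumption $2M\ge N$ one has $r\le\textnormal{min}(2M,N)=N\le M+\tfrac12N$, so $r$ lies left of the vertex, and also left of the larger root $r_+=M+\tfrac12N+\tfrac12\sqrt{D}\ge N\ge r$; consequently the quadratic inequality is equivalent to $r<r_-$, where $r_-\eqdef M+\tfrac12N-\tfrac12\sqrt{D}$ is the smaller root and $D\eqdef(2M+N)^2-4(M+N)(r_1+r_2)+4(r_1^2+r_2^2)$. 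A direct expansion verifies $D=(M+N-2r_1-2r_2)^2+3M^2+2MN-8r_1r_2$, so ``$r<r_-$'' is exactly (\ref{eq_th_bjr}); an elementary estimate (using $r_1^2+r_2^2\ge2r_1r_2$, $r_i\le\textnormal{min}(M,N)$ and $2M\ge N$) shows $D\ge0$, hence $r_-$ is real.

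Finally, condition (\ref{lemma1_cond}) is what makes (\ref{eq_th_bjr}) attainable by a legitimate rank, that is, one obeying $r\ge\textnormal{max}(r_1,r_2)$ from Lemma~\ref{bounds_on_rank_M}: such an $r$ exists if and only if $\textnormal{max}(r_1,r_2)<r_-$. Taking $r_1=\textnormal{max}(r_1,r_2)$ without loss of generality, I would isolate $\sqrt{D}$, note that $M+\tfrac12N-r_1>0$ since $r_1\le\textnormal{min}(M,N)$, square both sides, and cancel the common $(2M+N)^2$ and $4r_1^2$ terms; what survives collapses to $Mr_1<r_2(M+N-r_2)$, which after dividing by $Mr_2>0$ is exactly (\ref{lemma1_cond}). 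Assembling the pieces: the benefit is equivalent to $K_1+K_2>K$ and hence to (\ref{eq_th_bjr}), while (\ref{eq_th_bjr}) together with $r\ge\textnormal{max}(r_1,r_2)$ forces (\ref{lemma1_cond}); therefore (\ref{lemma1_cond}) and (\ref{eq_th_bjr}) together are necessary and sufficient for the joint recovery to be beneficial.

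The step I expect to be the main obstacle is purely computational: the two algebraic reductions --- collapsing $D$ into the displayed form of (\ref{eq_th_bjr}), and showing $\textnormal{max}(r_1,r_2)<r_-$ is equivalent to (\ref{lemma1_cond}) --- each demand careful, sign-aware handling of the square root, and the second also relies on the a priori bounds $r_i\le\textnormal{min}(M,N)$ and $2M\ge N$ both to justify squaring and to place $r$ correctly relative to the parabola's vertex and its larger root $r_+$.
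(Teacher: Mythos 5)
Your proposal is correct and follows essentially the same route as the paper: both reduce the benefit of joint recovery to the quadratic inequality $(2M+N-r)r < (M+N-r_1)r_1+(M+N-r_2)r_2$, identify (\ref{eq_th_bjr}) with $r$ lying below the smaller root of that quadratic, and identify (\ref{lemma1_cond}) with that smaller root exceeding $\max(r_1,r_2)$ so that an admissible rank satisfying (\ref{eq_th_bjr}) exists. You supply the algebra the paper only sketches (the discriminant identity and the evaluation at $r=\max(r_1,r_2)$ both check out); the one blind spot you share with the paper is that writing the smaller root in the form of (\ref{eq_th_bjr}) tacitly assumes $M+N-2r_1-2r_2>0$, since otherwise the prefactor of the square root flips sign and the displayed expression becomes the larger root.
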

\begin{proof}
The proof for the necessary condition in (\ref{lemma1_cond}) hinges on the fact that the quadratic inequality on $r$ that describes the cases in which the joint recovery is beneficial, given by
\begin{equation} \label{eq:quadratic_th}
\sqeq
(2M+N-r)r < (M+N-r_1)r_1 + (M+N-r_2)r_2,
\end{equation}
is satisfied for a value bounded by Lemma \ref{bounds_on_rank_M}. This is achieved by showing that the smaller root of the quadratic is contained in the open interval $\big(\textnormal{max}(r_1,r_2),r_1+r_2\big)$.

The sufficient condition in (\ref{eq_th_bjr}) is equivalent to showing that the smaller root of the quadratic in (\ref{eq:quadratic_th}) is a strict upper bound for $r$.
\end{proof}
Note that the necessary condition from Theorem \ref{th:NS_for_bjr}, that is the inequality in (\ref{lemma1_cond}) depends only on the matrices $\Mm_1$ and $\Mm_2$ and does not depend on the combined matrix $\Mm$. In contrast, the sufficient condition in (\ref{eq_th_bjr}) provides an upper bound for the rank of the combined matrix such that the total number of observations required for the joint recovery is fewer when compared to the independent recovery case. Theorem \ref{th:NS_for_bjr} provides a necessary and sufficient condition for the joint recovery of two data matrices to be beneficial.
%\vspace{-2mm}
\section{Numerical results} \label{sec:numres}

This section presents a numerical evaluation of the joint recovery performance for two datasets. The matrices used for the simulations are generated using the model described in Section \ref{sec:syn_data_model} and the size of the matrices $\Mm_1$ and $\Mm_2$ is fixed such that $M=50$ and $N=100$, respectively. Hence, the joint matrix $\Mm$ is a square matrix of size $100$.
The range of rank values selected aims to characterize the joint recovery in two scenarios: when the two combined matrices have the same rank, i.e., $r_1=6$ and $r_2=6$, and when the ratio between the two rank values is small, i.e., $r_1=6$ and $r_2=9$. 
To facilitate this, the synthetic data model presented in Section \ref{sec:syn_data_model} is used to generate correlated data matrices with the rank values of interest.
%\vspace{-2mm}
\subsection{Simulation framework} 

\begin{figure}[t!]
\centering
\includegraphics[width=0.5\textwidth]{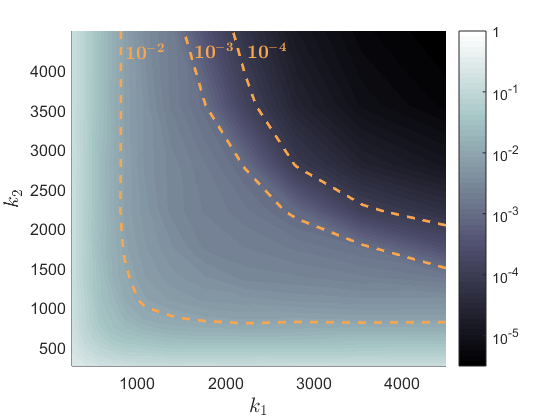}
\caption{Joint recovery error using BSVT, measured by NMSE, when $r_1=6$, $r_2=6$, $r=9$ and SNR=$50$ dB.}
\label{fig:bsvt_snr50_6_6_9}
%\vspace{-5mm}
\end{figure}

\begin{figure}[t!]
\centering
\includegraphics[width=0.5\textwidth]{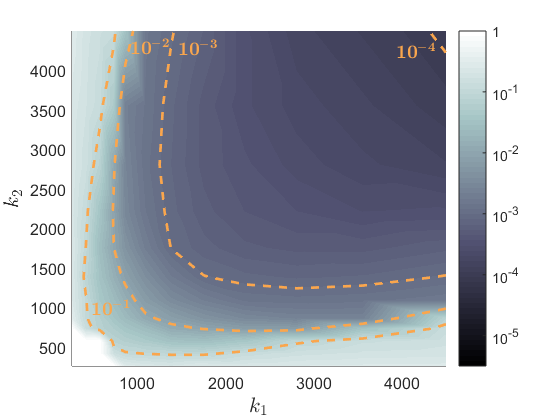}
\caption{Joint recovery error using SVT, measured by NMSE, when $r_1=6$, $r_2=6$, $r=9$ and SNR=$50$ dB.}
\label{fig:svt_snr50_6_6_9}
%\vspace{-5mm}
\end{figure}

\begin{figure}[t!]
\centering
\includegraphics[width=0.5\textwidth]{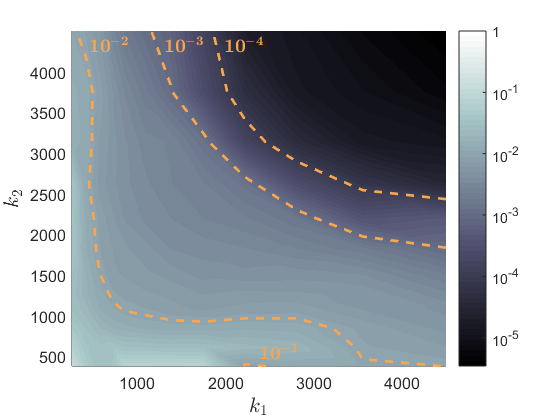}
\caption{Joint recovery error using BSVT, measured by NMSE, when $r_1=6$, $r_2=9$, $r=10$ and SNR=$50$ dB.}
\label{fig:bsvt_snr50_6_9_10}
%\vspace{-5mm}
\end{figure}

\begin{figure}[t!]
\centering
\includegraphics[width=0.5\textwidth]{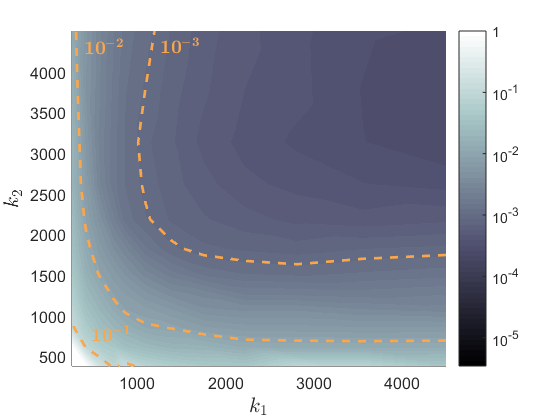}
\caption{Joint recovery error using SVT, measured by NMSE, when $r_1=6$, $r_2=9$, $r=10$ and SNR=$50$ dB.}
\label{fig:svt_snr50_6_9_10}
%\vspace{-5mm}
\end{figure}

Using the mathematical model defined in Section \ref{sec:syn_data_model}, the covariance matrix for the combined matrix $\Mm$ is given by (\ref{eq:synth_data_model}) where the intra-correlation between the state variables in $\Mm_1$ is modelled by the $\upsilon_{11}$ parameter in (\ref{eq:end_corr}), the intra-correlation between the state variables in $\Mm_2$ is modelled by the $\upsilon_{22}$ parameter in (\ref{eq:end_corr}) and the cross-correlation between $\Mm_1$ and $\Mm_2$ is modelled by $\psi$ in (\ref{eq:synth_data_model}). The numerical analysis shows that a larger value of $\upsilon_{ll}$ results in a more correlated matrix $\Mm_l$ and consequently a smaller value for $r_l$, where $l \in \{1,2\}$. Moreover, the cross-correlation between $\Mm_1$ and $\Mm_2$ increases with the value of $\psi$ which leads to a smaller value for $r$ within the bounds defined by Lemma \ref{bounds_on_rank_M}.

The matrix $\Mm$ generated using the model in (\ref{eq:synth_data_model}) is not exactly low rank. Instead, it can be well approximated by a low rank matrix. 
Let us denote by $\widetilde{\Mm}(r)$ the low rank approximation of rank $r$ obtained by vanishing the smallest $N-r$ singular values of the matrix $\Mm$. 
In the following, $r$ is defined as the minimum value for which the NMSE between the matrix $\Mm$ and the low rank approximation of rank $r$, i.e., $\widetilde{\Mm}(r)$, is below $10^{-3}$.
Consequently, the model in (\ref{eq:synth_data_model}) is used to generate data matrices $\Mm$ such that the low rank approximations $\widetilde{\Mm}_1(r_1)$, $\widetilde{\Mm}_2(r_2)$ and $\widetilde{\Mm}(r)$ have the intended ranks. Moreover, the low rank approximation of the combined matrix, i.e., $\widetilde{\Mm}(r)$, is used to evaluate the numerical performance of both BSVT and SVT in exploiting the correlation between the two datasets.
The recovery performance of both algorithms is averaged over ten realizations of $\Omega$, where the locations of the available entries are sampled uniformly at random in each dataset.

In the following, a numerical analysis for the joint recovery performance of SVT and BSVT is presented for the cases in which the rank values for the combined matrices are: $r_1=6$, $r_2=6$, $r=9$ and $r_1=6$, $r_2=9$, $r=10$. 
The choice of rank for the combined matrix resembles a high cross-correlation case in which the value of $r$ satisfies the condition imposed by Theorem \ref{th:NS_for_bjr}. We consider a low noise regime for which SNR=$50$ dB in both datasets to emphasize the impact of the intra and cross-correlation in the recovery process, where the SNR in dataset $l \in \{1, 2\}$ is defined as
\begin{equation} 
\textnormal{SNR}_l \eqdef 10 \textnormal{log}_{10} {\frac{\frac{1}{M}\textnormal{Tr}(\Sigmam_{ll})}{\sigma_{\Nm_l}^2}},
\end{equation}
where $\Sigmam_{ll}$ is defined in (\ref{eq:sigma_ll}) and $\sigma_{\Nm_l}^2$ is described in (\ref{eq:noise}). 
A wider range of rank values and noise regimes is presented in \cite{genes_phd}.
The efficiency in exploiting the cross-correlation between the combined datasets is evaluated by comparing the recovery performance across different sampling regimes in which the number of observations in $\Omega$ is constant but the ratio between the number of available entries in each dataset varies. The cross-correlation is successfully exploited when the recovery error is similar across different sampling regimes.

%\begin{figure}[!t]
%\centering
%\subcaptionbox{BSVT \label{fig:bsvt_snr50_6_6_9}}
%{\includegraphics[width=0.24\textwidth]{figures//nmse_bsvt_snr50_6_6_9_cont_v2.png}}%
%\hfill
%\subcaptionbox{SVT \label{fig:svt_snr50_6_6_9}}
%{\includegraphics[width=0.24\textwidth]{figures/nmse_svt_snr50_6_6_9_cont_v2.png}}%
%\caption{Joint recovery error measured by NMSE, when $r_1=6$, $r_2=6$, $r=9$ and SNR=$50$ dB.}
%\vspace{-5mm}
%\end{figure}

Fig. \ref{fig:bsvt_snr50_6_6_9} depicts the performance of the BSVT algorithm when $r_1=6$, $r_2=6$, $r=9$ and SNR=$50$ dB. 
Interestingly, the contour lines for the $10^{-4}$ and $10^{-3}$ recovery error exhibit a similar shape to the line depicted by $(2M+N-r)r=k_1+k_2$ in Fig. \ref{fig:polytope_eq_w_gain}. This suggests that the BSVT algorithm successfully exploits the cross-correlation in that region and obtains a similar recovery performance tradeoff when the ratio between $k_1$ and $k_2$ varies for a fixed value of $k_1+k_2$.
In contrast, the contour lines for $10^{-3}$, $10^{-2}$ and $10^{-1}$ SVT recovery error depicted in Fig. \ref{fig:svt_snr50_6_6_9} exhibit a similar shape to the region $\Rc_4$ in Fig. \ref{fig:polytope_eq_w_gain} which corresponds to the independent recovery area in which the cross-correlation is not exploited. Based on this observation, it is reasonable to assume that SVT is not effective in exploiting the cross-correlation as the recovery error changes with the ratio between $k_1$ and $k_2$ for a fixed total number of observations.

%\begin{figure}[!t]
%\centering
%\subcaptionbox{BSVT \label{fig:bsvt_snr50_6_9_10}}
%{\includegraphics[width=0.24\textwidth]{figures//nmse_bsvt_snr50_6_9_10_cont_v2.png}}%
%\hfill
%\subcaptionbox{SVT \label{fig:svt_snr50_6_9_10}}
%{\includegraphics[width=0.24\textwidth]{figures/nmse_svt_snr50_6_9_10_cont_v2.png}}%
%\caption{Joint recovery error measured by NMSE, when $r_1=6$, $r_2=9$, $r=10$ and SNR=$50$ dB.}
%\vspace{-5mm}
%\end{figure}

Fig. \ref{fig:bsvt_snr50_6_9_10} depicts the performance of the BSVT algorithm when $r_1=6$, $r_2=9$, $r=10$ and SNR=$50$ dB. In line with the case discussed in Fig. \ref{fig:bsvt_snr50_6_6_9}, the contour lines for the $10^{-4}$ and $10^{-3}$ recovery error exhibit a similar shape to the fundamental limit in Fig. \ref{fig:polytope_eq_w_gain}. This suggests that the BSVT approach is able to exploit the cross-correlation between the combined datasets in the almost noiseless regime for both rank cases considered.
In Fig. \ref{fig:svt_snr50_6_9_10} the performance of the SVT algorithm is depicted for the case in which $r_1=6$, $r_2=9$, $r=10$ and SNR=$50$ dB.
In this case, the shape of the contour lines for $10^{-3}$ and $10^{-2}$ recovery error is similar to the shape of the region $\Rc_4$ in Fig. \ref{fig:polytope_eq_w_gain} which suggests that the SVT algorithm is not efficient in exploiting cross-correlation. Consequently, the BSVT algorithm is able to exploit the cross-correlation between the combined datasets more effectively when compared to the SVT approach in the almost noiseless regime. 
The gain in recovery performance is facilitated by the prior knowledge incorporated in the structure of the BSVT algorithm.

%\vspace{-4mm}
\section{Conclusion}
The fundamental limits for the joint recovery of two datasets have been characterized in terms of the rank of the single and combined data matrices. Theoretical conditions are derived for the case in which the joint recovery of two datasets requires less observations compared to the independent recovery case. Based on the insight provided by the fundamental limit, the number of cases in which the joint recovery is feasible is significantly larger when compared to the independent recovery setting.

A model for correlated datasets is proposed. Numerical results show that the correlation between different types of data is exploited by leveraging the information provided by the dataset with fewer missing entries to enable the recovery of the other dataset.
Moreover, in contrast to the SVT algorithm, the performance of the BSVT approach matches the geometry imposed by the fundamental limit which suggests that BSVT is indeed better suited to exploit the correlation between datasets.

% conference papers do not normally have an appendix
% trigger a \newpage just before the given reference
% number - used to balance the columns on the last page
% adjust value as needed - may need to be readjusted if
% the document is modified later
%\IEEEtriggeratref{8}
% The "triggered" command can be changed if desired:
%\IEEEtriggercmd{\enlargethispage{-5in}}

% references section

% can use a bibliography generated by BibTeX as a .bbl file
% BibTeX documentation can be easily obtained at:
% http://mirror.ctan.org/biblio/bibtex/contrib/doc/
% The IEEEtran BibTeX style support page is at:
% http://www.michaelshell.org/tex/ieeetran/bibtex/

%\newpage
\bibliographystyle{IEEE}
\balance
\bibliography{references}

% argument is your BibTeX string definitions and bibliography database(s)
%\bibliography{IEEEabrv,../bib/paper}
%
% <OR> manually copy in the resultant .bbl file
% set second argument of \begin to the number of references
% (used to reserve space for the reference number labels box)
%\begin{thebibliography}{1}
%
%\bibitem{IEEEhowto:kopka}
%H.~Kopka and P.~W. Daly, \emph{A Guide to \LaTeX}, 3rd~ed.\hskip 1em plus
%  0.5em minus 0.4em\relax Harlow, England: Addison-Wesley, 1999.
%
%\end{thebibliography}
\end{document}